\newtheorem{definition}{Definition}[section]
\newtheorem{proposition}{Proposition}[section]
\newtheorem{corollary}{Corollary}[section]
\journal{tbd}
\newcommand{\y}{{\bf{y}}}
\newcommand{\x}{{\bf{x}}}
\newcommand{\X}{{\bf{X}}}
\renewcommand{\P}{{\bf{P}}}
\begin{document}

\begin{frontmatter}

\title{A sufficient condition for penalized polynomial regression to be invariant to translations of the predictor variables}

\author{Johannes W. R. Martini}
\ead{jwrmartini@gmail.com}


\begin{abstract} 
Whereas translating the coding of predictor variables does not change the fit of a polynomial least squares regression, penalized polynomial regressions are potentially affected. 
A result on which terms can be penalized to maintain the invariance to translations of the coding has earlier been published. A generalization of a corresponding proposition, which requires a more precise mathematical framework, is presented in this short note.

\end{abstract}

\begin{keyword}
Penalized regression; Polynomial regression; Translation invariance
\end{keyword}

\end{frontmatter}


\section{Introduction}
Linear regressions are everyday tools in statistical applications. A particular type of regression, which is linear in the coefficients, and which allows for modeling interaction between the predictor variables, is polynomial regression. Here, the response $y$ is a polynomial in the predictors $\x=(x_1,...,x_p)$ 
\begin{equation}
	y_j= \sum\limits_{(i_1,i_2,...,i_p)\in I}a_{i_1,i_2,...,i_p} \; x_{j,1}^{i_1}x_{j,2}^{i_2}\cdot \cdot \cdot x_{j,p}^{i_p} + \epsilon_j, \qquad i_1,...,i_p \in \mathbb{N}
\end{equation}
In the standard setup, the data $\y=(y_1,...,y_n)$ and $\X=(x_{j,i})_{j=1,...n;i=1,...,p}$ is given, $\epsilon_j$ is assumed to be normally distributed $\epsilon_j \stackrel{i.i.d.}{\sim} \mathcal{N}(0,\sigma^2_\epsilon)$, and the coefficients $a_{i_1,i_2,...,i_p}$ have to be determined by a regression. The index set $I$ is chosen and defines which monomials are included in the model. \added{Here, $\X$ represents the matrix of data, but not the regressors of the regression problem, that is it is not the design matrix. The latter is obtained from $\X$ by multiplying its corresponding columns, according to the set of monomials $I$ of the polynomial model. Each monomial gives a regressor. } \\

Provided that the polynomial model satisfies \added{a ``completeness condition"} \cite{martini2019lost}, an ordinary least squares fit remains invariant when the coding of $\X$ is translated. Contrarily, it has been illustrated that the results of penalized regressions are likely to be affected by translations of the coding \citep{he2016does,martini2017genomic}. \added{Exceptions of penalized polynomial regressions being invariant to translations of the variable coding are given by those only penalizing the size of coefficients of monomials of highest total degree, provided the model allows to adapt all coefficients of lower degree \citep{martini2019lost}.} A generalization of this result is presented in this short note.
\section{Recapitulation of the required mathematical background}
We will recapitulate some technical background. The first definition provides a partial order on monomials which will be used to define the greatest monomials of a polynomial afterwards.
To simplify the treatise, we assume whenever we talk about a regression that a (unique) solution exists. Moreover, think in the following of $\x$ as any initially chosen coding of the predictor variable.
\begin{definition}[A partial order on monomials] For two monomials 
	$$m_1:= x_{1}^{i_1}x_{2}^{i_2} ... x_{p}^{i_p} \mbox{ and }	m_2:= x_{1}^{k_1}x_{2}^{k_2} ... x_{p}^{k_p},$$ we call $m_2$ \emph{greater than or of equal size as} $m_1$, in symbols
	$m_2 \geq m_1$, if
	$$ k_l \geq i_l  \qquad \forall l \in \{1,...,p\}.$$
\end{definition}
Note that if $m_1 \geq m_2$ and $m_2 \geq m_1$, it follows that $m_2=m_1$. \\

We use this partial order to specify what a ``greatest'' monomial is.

\begin{definition}[Greatest monomial] We call a monomial $m_1$ of a polynomial $f$ a \emph{greatest monomial} of $f$, if $f$ does not possess a monomial $m_2 \neq m_1$ with a non-zero coefficient which is greater than $m_1$. For a model $\{ I | I \subset \mathbb{R}^{|I|} \}$ a monomial is called greatest if there is no greater monomial $m_2 \neq m_1$ whose tuple of exponents is an element of $I$.
\end{definition}

Having defined what a greatest monomial is, we define ``translation'', ``translation invariance'' and the sum of squared residuals. 

\begin{definition}[Translation of a vector and of a polynomial] Let $\X$ be an $n \times p$ matrix of $p$ predictor variables measured $n$ times for the respective data $\y=(y_1,...,y_n)$. For a given \added{$1 \times p$-vector} $\P$, we define the translation of the predictor variables $$T_{\P}(\X):= \X + \mathbf{1}_n^t \P.$$
\added{$ \mathbf{1}_n$ denotes here the $1 \times n$-vector with each entry equal to 1.}
	Analogously, we define the translation of a polynomial $f(\x)$
	$$T_{\P}(f(\x)):= f(\x+ \P)= \left( f \circ T_{\P}\right)(\x).$$
\end{definition}
The definition of the translation of a polynomial above has the obvious property  
\begin{equation}\label{obvious}\left [ T_{-\P} \circ f \right ] \circ T_{\P}= f \end{equation}

We quickly define what we mean by ``translation invariance".

\begin{definition}[Translation invariance] A regression method $R$ that maps the data $(\X,\y)$ to a function $R_{\X,\y} (\x)$ is called \emph{translation invariant} if and only if 
	\begin{equation}\label{invariance}
		R_{\X,\y} (\x)=R_{T_\P(\X),\y} (T_\P(\x))
	\end{equation}
	for any data $(\X,\y)$ and any translation vector $\P$.
\end{definition}
In words, the definition of translation invariance means that for a regression, the resulting fit mapping $\x$ to $y$ (see Eq.(1)) is \added{identical} when applying the regression on $\left(\X,\y \right)$ to obtain $R_{\X,\y} (\x)$ or when using $\left(T_\P(\X),\y\right)$ to obtain a function $R_{T_\P(\X),\y} (T_\P(\x))$ defined on the translated predictor variables $T_\P(\x)$. 
For an example to see that this is not always the case, see Table~1 of \cite{martini2019lost}.

\begin{definition} Let the data $(\X,\y)$ be given. The function \emph{sum of squared residuals} (SSR) maps a polynomial $f$ to a real, non-negative number by
	\begin{equation}\label{SSR}SSR_{\X,\y}(f) := \sum_{j = 1,...,n} (y_j - f(\X_{j,\bullet}))^2.\end{equation}
	Here, $\X_{j,\bullet}$ denotes the $j$-th row of $\X$.
\end{definition}
With these definitions we can come to the results.

\section{Results}
\begin{proposition}\label{prop:01} Let $f(\x)$ be a polynomial. Then for any data $(\X,\y)$ and translation vector $\P$, 
	\begin{equation}\label{SSR2} SSR_{\X,\y}(f)=SSR_{T_\P(\X),\y}(T_{-\P}(f)).\end{equation}
	Moreover, for any greatest monomial $m$ of $f$, the corresponding coefficient $a_m$ of $f$ and $\tilde{a}_m$ of $T_{-\P}(f)$ will be identical:
	\begin{equation}\label{eq:06}a_m = \tilde{a}_m. \end{equation}
\end{proposition}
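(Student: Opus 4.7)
My plan is to handle the two claims of the proposition separately, since both are essentially direct consequences of the definitions once set up correctly. For the first claim, about the equality of sums of squared residuals, I would proceed by a direct substitution argument using the identity $T_{-\P}(f) \circ T_{\P} = f$ recorded in Eq.~(\ref{obvious}). Since the $j$-th row of $T_{\P}(\X)$ is $\X_{j,\bullet} + \P$, plugging into $T_{-\P}(f)$ gives back $f(\X_{j,\bullet})$, so each residual on the right-hand side of (\ref{SSR2}) equals the corresponding residual on the left-hand side, and summing their squares yields (\ref{SSR2}) immediately.

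For the second claim, about the coefficient of a greatest monomial, I would expand $T_{-\P}(f)(\x) = f(\x - \P)$ via the multivariate binomial theorem. Writing $f(\x) = \sum_{(i_1,\ldots,i_p) \in I} a_{i_1,\ldots,i_p} x_1^{i_1}\cdots x_p^{i_p}$, each factor $(x_l - P_l)^{i_l}$ produces only terms of the form $x_l^{j_l}$ with $0 \leq j_l \leq i_l$. Thus the monomial $m = x_1^{k_1}\cdots x_p^{k_p}$ can appear in $f(\x - \P)$ only from index tuples $(i_1,\ldots,i_p)$ with $i_l \geq k_l$ for every $l$, that is, only from monomials that are $\geq m$ in the partial order of the first definition.

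The crucial step is then to invoke the greatest-monomial hypothesis: by definition, no exponent tuple strictly greater than $(k_1,\ldots,k_p)$ contributes with a non-zero coefficient in $f$. Consequently the only contribution to the coefficient of $m$ in $T_{-\P}(f)$ comes from the term $a_m (x_1 - P_1)^{k_1}\cdots(x_p - P_p)^{k_p}$ itself, whose leading monomial (taking $j_l = k_l$ in every factor) contributes precisely $a_m$. Hence $\tilde{a}_m = a_m$, as asserted in (\ref{eq:06}).

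I do not anticipate any substantial technical difficulty. The main conceptual observation is that the partial order on monomials is tailored exactly to capture the support of a coordinatewise binomial expansion: a translated power $(x_l - P_l)^{i_l}$ can only produce exponents $\leq i_l$ in its coordinate. Once this is made explicit, the greatest-monomial hypothesis is precisely what is needed to exclude every contribution except the diagonal one, and the two parts of the proposition both drop out without further calculation.
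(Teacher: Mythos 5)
Your proof is correct and follows essentially the same route as the paper's (much terser) argument: direct substitution via $T_{-\P}(f)\circ T_{\P}=f$ for the SSR identity, and expansion of $f(\x-\P)$ combined with the observation that only monomials $\geq m$ can contribute to the coefficient of $m$, so the greatest-monomial hypothesis isolates the diagonal contribution $a_m$. Your write-up simply makes explicit the binomial-expansion details that the paper leaves implicit.
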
 

\begin{proof} Concerning $SSR$, use the definition in Eq.(\ref{SSR}). Moreover, expand $f(\x-\P)$ to receive the coefficients of $T_{-\P}(f)$. A greatest monomial of $f$ will be a greatest monomial of $T_{-\P}(f)$. Inserting $\x-\P$ in a greatest monomial and expanding it, gives the same coefficient for this monomial.
\end{proof}

\begin{corollary} Let $G$ be the set of greatest monomials of polynomial $f$. Moreover, let $L_{\X,\y}$ be a loss function of \added{the form}  
	\begin{equation}\label{GOF}L_{\X,\y} = g(SSR_{\X,\y}) + PEN(a_{\in G})\end{equation}
	with $PEN$ any penalty function only dependent on the greatest monomials of $f$, and $g:\mathbb{R}^+_0 \rightarrow \mathbb{R}^+_0$ any function. Then 
	\begin{equation}\label{central}L_{\X,\y}(f)=L_{T_\P (\X),\y}(T_{-\P} (f)).\end{equation}
\end{corollary}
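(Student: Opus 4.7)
The plan is to reduce the corollary to the two conclusions of Proposition~\ref{prop:01}, applied to the two summands of the loss function separately. Since $L_{\X,\y}$ is a sum of $g(SSR_{\X,\y})$ and $PEN(a_{\in G})$, it suffices to show that each of these summands takes the same value when one replaces $(\X,\y,f)$ by $(T_\P(\X),\y,T_{-\P}(f))$. Both facts are essentially already available from Proposition~\ref{prop:01}; the work is just to check that the penalty term is well defined on both sides in a compatible way.

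First, I would handle the goodness-of-fit summand. The first statement of Proposition~\ref{prop:01} gives $SSR_{\X,\y}(f)=SSR_{T_\P(\X),\y}(T_{-\P}(f))$, so applying $g$ to both sides yields $g(SSR_{\X,\y}(f)) = g(SSR_{T_\P(\X),\y}(T_{-\P}(f)))$. Next, for the penalty term I would argue that the set of greatest monomials is preserved by $f \mapsto T_{-\P}(f)$. This is exactly the first sentence of the proof of Proposition~\ref{prop:01}: when one expands $f(\x-\P)$, each monomial of $f$ produces only itself and monomials strictly smaller in the partial order, so no monomial of $T_{-\P}(f)$ can exceed a greatest monomial of $f$, and vice versa when applying $T_{\P}$ via Eq.~(\ref{obvious}). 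Consequently the set $G$ is unchanged, and by the second statement of Proposition~\ref{prop:01} each coefficient $a_m$ for $m \in G$ equals the corresponding coefficient $\tilde{a}_m$ of $T_{-\P}(f)$, so $PEN(a_{\in G}) = PEN(\tilde{a}_{\in G})$.

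Adding the two equalities then gives $L_{\X,\y}(f)=L_{T_\P(\X),\y}(T_{-\P}(f))$, which is the claim. The only point deserving care — and the main obstacle if one had to prove this from scratch — is the invariance of the set of greatest monomials under translation of the argument; once that is granted, everything else is a mechanical substitution into the definition of $L$. Since this invariance is already implicit in Proposition~\ref{prop:01}, the corollary is essentially a direct packaging of that proposition into the loss-function formalism.
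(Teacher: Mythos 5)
Your proof is correct and follows essentially the same route as the paper, which simply cites Eqs.~(\ref{SSR2})--(\ref{GOF}): the $g(SSR)$ term is handled by the first claim of Proposition~\ref{prop:01} and the penalty term by the second, together with the observation that translation preserves the set of greatest monomials. Your write-up just makes explicit the details the paper leaves implicit.
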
 
\begin{proof} Eqs.(\ref{SSR2}-\ref{GOF}). 
\end{proof}

\begin{corollary}\label{cor:02} Let us consider a polynomial regression method $R$ defined by minimizing a loss function $L_{\X,\y}$ \added{of form} (\ref{GOF}). Moreover, let $\mathcal{F}$ denote the set of polynomials across which we look for the one minimizing $L_{\X,\y}$ and let  $\forall \, f \in \mathcal{F}$ and $\forall \, \P \in \mathbb{R}^p$ be $T_\P(f) \in \mathcal{F}$. Then
	\begin{equation}\label{Finalresult}
		R_{T_\P(\X),\y} (T_\P(\x)) = \left[ T_{-\P} \circ R_{\X,\y} \right] \circ T_\P (\x) = R_{\X,\y}(\x)
	\end{equation}
	and thus Eq.(\ref{invariance}) is fulfilled which means the fit is invariant to translations of the coding of the predictor variables. 
\end{corollary}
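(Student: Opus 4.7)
My plan is to reduce Corollary~\ref{cor:02} to the loss invariance (\ref{central}) already established, together with the closure assumption $T_\P(f)\in\mathcal{F}$ for every $f\in\mathcal{F}$ and every $\P$. The argument has three linked steps: (i) upgrade ``$T_\P$ sends $\mathcal{F}$ into $\mathcal{F}$'' to ``$T_\P$ is a bijection of $\mathcal{F}$ onto itself'', (ii) push the loss identity through the minimisation defining $R$, and (iii) unfold the resulting composition using equation (\ref{obvious}).

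First, I would apply the closure hypothesis with both $\P$ and $-\P$. Combined with the identity $T_\P\circ T_{-\P}=\mathrm{id}$ extracted from (\ref{obvious}), this shows that $f\mapsto T_{-\P}(f)$ is a bijection from $\mathcal{F}$ onto itself, which is what will license a change of variable inside the minimisation. Note that uniqueness of the minimiser is already part of the standing assumptions at the start of Section~2, so there is no ambiguity in what $R_{\X,\y}$ denotes.

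Next, I would invoke the preceding corollary: for every $f\in\mathcal{F}$, equation (\ref{central}) reads $L_{\X,\y}(f)=L_{T_\P(\X),\y}(T_{-\P}(f))$. Setting $g:=T_{-\P}(f)$, so that $f=T_\P(g)\in\mathcal{F}$ by step (i), this rewrites as $L_{T_\P(\X),\y}(g)=L_{\X,\y}(T_\P(g))$ for every $g\in\mathcal{F}$. Minimising over $g\in\mathcal{F}$ and using the bijection from step (i) to recognise the right-hand side as a minimisation of $L_{\X,\y}$ over its own argument $T_\P(g)\in\mathcal{F}$, I conclude
$$R_{T_\P(\X),\y} \;=\; T_{-\P}\bigl(R_{\X,\y}\bigr),$$
which is the first equality of (\ref{Finalresult}). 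Evaluating at $T_\P(\x)$ and applying (\ref{obvious}) directly to $f=R_{\X,\y}$ yields the second equality $[T_{-\P}\circ R_{\X,\y}]\circ T_\P(\x)=R_{\X,\y}(\x)$, which is exactly the translation-invariance condition (\ref{invariance}).

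The only genuine subtlety I foresee is the bookkeeping in the argmin step: one must verify that $T_\P$ is truly a bijection of the feasible set $\mathcal{F}$, not merely an injection on the ambient space of polynomials, because otherwise the minimiser of the translated problem need not equal $T_{-\P}(R_{\X,\y})$. The hypothesis that $T_\P(f)\in\mathcal{F}$ for every $f\in\mathcal{F}$ and every $\P\in\mathbb{R}^p$ is precisely the assumption needed to rule this out; once it is recorded, everything else is formal manipulation of the identities already at hand.
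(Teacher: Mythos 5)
Your proof is correct and follows essentially the same route as the paper: the first equality comes from pushing the loss identity (\ref{central}) through the minimisation, and the second from Eq.~(\ref{obvious}). The only difference is that you make explicit the point that the closure hypothesis (applied with both $\P$ and $-\P$) turns $T_\P$ into a bijection of $\mathcal{F}$, a detail the paper's argmin step uses implicitly; this is a welcome clarification rather than a different argument.
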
 

\begin{proof} The second equality of Eq.(\ref{Finalresult}) is true for any function $f$ as stated in Eq.(\ref{obvious}). What requires a little bit more explanation is the first equality. Remember that
	$R_{\X,\y} (\x)$ is a function in $\x$ minimizing $L_{\X,\y}(f(\x))$.
	Eq.(\ref{central}) states that $T_{-\P} \circ R_{\X,\y}$ is the polynomial in $T_\P(\x)$ which minimizes $L_{T_\P(\X),\y}(f)$, which means that
	$$R_{T_\P(\X),\y} (T_\P(\x)) = \left[ T_{-\P} \circ R_{\X,\y} \right] \circ T_\P (\x). $$ 
\end{proof}
The statement of Corollary~\ref{cor:02} is more general than the result provided by \cite{martini2019lost}, since it allows the penalty function to be defined on the coefficients of the greatest monomials of the polynomial model, and not only on those of highest total degree. 
Monomials of highest total degree are greatest, but not every greatest monomial is of highest degree. \added{In particular, this more general sufficient condition may also be necessary.}

\section{Conclusion and Outlook}
We illustrated that any regression method defined by minimizing a loss function which is the sum of a function of the sum of squared residuals (SSR), and a penalty function only depending on the coefficients of the greatest monomials of the polynomial model, is invariant to translations of the coding of the predictor variables. Moreover, Eq.(\ref{GOF}) can easily be generalized by substituting the SSR by another function defined on a different norm, since Eq.(\ref{SSR2}) will still hold. \added{Finally, note that it may be the case that --for a regression defined by a loss function of form Eq.~(\ref{GOF})-- it is also a necessary condition that the penalty function only depends on the coefficients of greatest monomials. A general proof of this conjecture, maybe with some additional minor restrictions on the stucture of the regression problem, remains to be found.}

\bibliography{mybibfile}

\end{document}